\newtheorem{theorem}{Theorem}
\newtheorem{lemma}[theorem]{Lemma}
\begin{document}

\title{ A Distributed Differential Space-Time Coding Scheme With Analog Network Coding in Two-Way Relay Networks}

\author{Qiang~Huo, Lingyang~Song, Yonghui~Li, and~Bingli~Jiao%
\thanks{Manuscript received  September 11, 2011; revised January 18, 2012 and May 7, 2012; accepted May 16, 2012.
Date of publication June 05, 2012; date of current version August 07, 2012.
The associate editor coordinating the review of this manuscript and approving it for publication was Prof. Yao-Win (Peter) Hong.
The work of L.~Song and B.~Jiao is   partially supported by the National Natural Science
Foundation of China 60972009 and 61061130561, National Science and
Technology Major Projects of China 2011ZX03005-003-02, and by Specialized Research Fund for the Doctoral
Program of Higher Education of China 20110001110102.
The work of Y.~Li is supported by the Australian Research Council Discovery Projects DP0985140, DP0877090, and  DP120100190, and Linkage project LP0991663.}%
\thanks{Q.~Huo, L.~Song, and B.~Jiao are with the State Key
Laboratory of Advanced Optical Communication Systems and Networks, School of Electronics Engineering and Computer Science, Peking University, Beijing, China, 100871 (e-mail: qiang.huo@pku.edu.cn; lingyang.song@pku.edu.cn; jiaobl@pku.edu.cn).}%
\thanks{Y.~Li is with the University of Sydney, Sydney, NSW 2006, Australia (e-mail: yonghui.li@sydney.edu.au).}%
\thanks{Color versions of one or more of the figures in this correspondence are available
online at http://ieeexplore.ieee.org.}%
\thanks{Digital Object Identifier 10.1109/TSP.2012.2202654}%
}

\markboth{IEEE TRANSACTIONS ON SIGNAL PROCESSING,~Vol.~60, No.~9, SEPTEMBER~2012}%
{Huo \MakeLowercase{\textit{et al.}}: A Distributed Differential Space-Time Coding Scheme With Analog Network Coding in Two-Way Relay Networks}



\maketitle



\begin{abstract}
In this paper, we consider general two-way relay networks~(TWRNs) with two source and $N$ relay  nodes.
A distributed differential space time coding with analog network coding~(DDSTC-ANC)  scheme  is proposed.
A simple blind estimation and a differential signal detector are developed to recover the desired signal at each source.
The pairwise error probability~(PEP) and  block error rate~(BLER)  of the  DDSTC-ANC scheme are analyzed. Exact  and simplified PEP  expressions are derived.
To improve the system performance, the optimum power allocation~(OPA) between the source and relay nodes is determined based on the simplified PEP expression.
The analytical results are verified through simulations.
\end{abstract}

\newpage

\begin{IEEEkeywords}
Analog network coding, distributed differential space-time coding, two-way relay network.
\end{IEEEkeywords}

\IEEEpeerreviewmaketitle




\section{Introduction}\label{sec:introduction}
\IEEEPARstart{I}{t} is well known that cooperative communication improves system robustness and capacity by allowing nodes to cooperate in their transmission to form a virtual antenna array~\cite{Laneman2003P2415}.
Compared to one-way relay networks~(OWRN), two-way communication is an effective scheme to improve the spectral efficiency by allowing the simultaneous exchange of two-way information flows.
In \cite{Rankov2006P1668}, the authors first studied the two-way relay networks~(TWRN) and derived its achievable bidirectional  rate.
The TWRNs have attracted  increased interest  due to its high spectral efficiency. Various protocols for the TWRNs have been proposed recently \cite{Popovski2007P16,Yuen2008P1385}.

In \cite{Yuen2008P1385}, the conventional network coding scheme was applied to the TWRNs. In this scheme,    two source nodes transmit signals to the relay, separately. The
relay decodes the received signals, performs binary network coding,
and   broadcasts network coded symbols back to both source
nodes.
However, this scheme may cause irreducible error floor due to
the detection errors which occur at the relay node.
In \cite{Popovski2007P16}, an amplify and forward based network coding scheme  
was proposed.
In this scheme, both source nodes transmit at the same time so that the relay receives a superimposed signal.
The relay   amplifies the received signal, and broadcasts it to both source nodes.
Each source node  subtracts its own contribution and estimates the signal transmitted from the other source node.
Analog network coding is particularly
useful in wireless networks as the wireless channel acts as a natural
implementation of network coding by summing the wireless signals
over the air.

Recently, distributed space-time coding  for  OWRNs was proposed in \cite{Jing2006P3524}  to achieve spatial diversity.
Since OWRNs take place only in a single-direction, to further improve the spectral efficiency of the   relay networks, the distributed space-time coding was proposed for TWRNs in \cite{Cui2009P658} and \cite{Wang2010P5331}.
However, most of the existing works on distributed space-time coding in TWRNs consider coherent detection at each receiver with the assumption of available  channel-state information~(CSI).
In some situations, e.g.,
the fast-fading environment, the acquisition of accurate CSI
presents great challenge, and   training becomes expensive and inefficient while there are  a large number of relays in the wireless networks\cite{Song2010P3933}.
In this case, differential modulation would be a practical solution because it requires no knowledge of the CSI.

The distributed differential space-time coding was first proposed for OWRNs in \cite{Jing2008P1092}.
In TWRNs, the signal received at the relay node is a superposition of two symbols sent from two source nodes.
Thus, if there is no CSI available at source and relay nodes, it will be very difficult to design distributed differential modulation schemes in TWRNs.
The challenge is due to the blind channel estimation from the superimposed signals at the relay and unknown self-interference at each destination.
In \cite{Utkovski2009P779}, the authors first extended  the distributed differential space-time coding to TWRNs. In order to enable  differential encoding
and decoding, this scheme starts with a four-stage initialization phase,  which is similar to traditional one-way relaying,  to transmit the bi-directional reference signals respectively. After initialization, each user then proceeds to the data transmission. Information exchange between two users is done in two time slots. However, the decoding algorithm in \cite{Utkovski2009P779} is a noncoherent detection scheme where the decoding of current symbol is based on the estimation of the previous symbol. Consequently, when one symbol was decoded incorrectly, it will affect the decoding of   consecutive symbols thus leading to serious error propagation. To solve this problem, periodical initialization of the protocol has to be performed to transmit new reference signals for decoding, making the proposed scheme inefficient. Furthermore, no pairwise error probability~(PEP) analysis was performed in \cite{Utkovski2009P779}  due to the complexity of the protocol.
 Song \textit{et al.} \cite{Song2010P3933}  presented an analog network coding scheme with differential modulation using the amplify-and-forward protocol for bidirectional relay networks.
However, this scheme is limited to single relay node, thus cannot be extended to the distributed space-time codes.

Unlike \cite{Jing2008P1092,Utkovski2009P779,Song2010P3933}, in this paper, we propose a distributed differential space time coding with analog network coding~(DDSTC-ANC)  scheme for the TWRNs with multiple relays.  
In this scheme, two source nodes perform differential modulation,
and transmit the differential modulated symbols to all the relay nodes in the first time slot.
The signal received at the relay node is a superposition of two transmitted symbols.
In the second time slot, the $N$ relay nodes broadcast the processed signals  to both source nodes simultaneously.
We propose a blind estimation technique that can be used to subtract the self-interference without knowledge of CSI at both relay nodes and two source nodes. A  simple differential signal detector is then developed to recover the desired signal at each source.
The performance of the proposed differential DDSTC-ANC scheme is analyzed and the PEP  and block error rate~(BLER) expressions are derived.
They show  that the proposed differential scheme can achieve the same diversity order as the coherent detection scheme  but is about $3$dB away compared to the coherent detection scheme due to the differential transmission.
To further improve the system performance, the optimum power allocation~(OPA) between the source nodes and the relay nodes is determined based on the provided simplified PEP expression.
The analytical results are verified through simulations.
Simulation results also show that the proposed differential scheme with OPA yields 
superior performance improvement over an equal power allocation~(EPA) scheme.

The rest of this paper is organized as follows:
In Section \ref{sec:system_model}, the system model is introduced. Section \ref{sec:DD-STC-ANC} presents the proposed DDSTC-ANC scheme.
The performance and diversity order of  DDSTC-ANC  are analyzed in Section \ref{sec:PEP_FER}.
In Section \ref{sec:OPA}, the OPA  for the DDSTC-ANC is presented.
Simulation results are provided in Section \ref{sec:simulations}.
In Section \ref{sec:conclusion}, we draw the main conclusions.

\emph{\textbf{Notation}}:
Matrices and vectors are denoted using capital letters and boldface lowercase letters, respectively.
$(\cdot)^*$, $(\cdot)^T$ and $(\cdot)^H$  represent conjugate, transpose  and conjugate transpose, respectively, for both matrix and vector.
For a complex matrix $A$, $\det A$  denotes the determinant
  of A. $\mathbf{I}_m$ is the $m \times m$ identity
matrix.
$\text{diag}\{a_1, \cdots, a_n\}$ stands for an $n \times n$ diagonal
matrix whose $i$th diagonal entry is $a_i$.
$\ln$ represents the natural logarithm, and $||\cdot||$  is the Frobenius norm.
$\mathbb{E}$ and $ P(\cdot)$ denote the expectation
and probability, respectively.



%
%
%



\section{System Model}\label{sec:system_model}

\begin{figure}[e] 
\centering
\graphicspath{{fig/}}
\includegraphics[width=0.75 \textwidth]{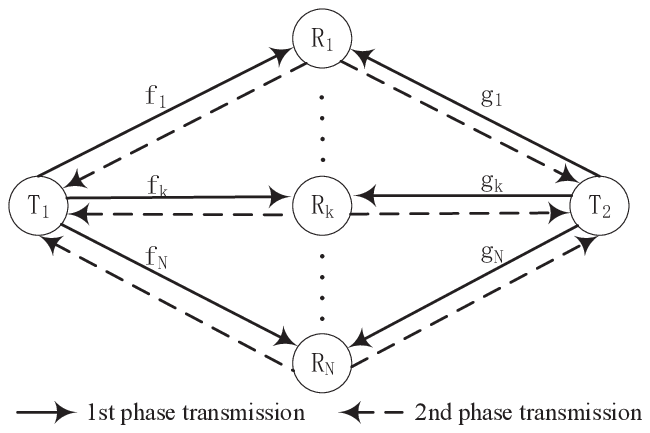}
\caption{Block diagram of the two-hop TWRN.}  \label{fig:SystemDiagram}
\end{figure}

%

In this paper, we consider a  general TWRN  with $N+2$ nodes,  as shown in Fig. \ref{fig:SystemDiagram},
where two source nodes, $T_1$
and $T_2$, want to exchange information with each other
through $N$
relay nodes.
It is assumed that
each node in the network is equipped
with one single antenna working in the half-duplex mode. 
We consider a quasi-static fading channel, where the channel remains constant for the duration of a
frame
and varies independently from one
frame to another.
Let $f_i$ and $g_i$ denote the complex fading channel coefficients of $T_1 - R_i$ and $T_2 - R_i$, respectively.
Furthermore, we assume Rayleigh flat fading channels, i.e., $f_i\sim\mathcal{CN}(0, \sigma_{f_i}^{2})$ and  $g_i\sim\mathcal{CN}(0, \sigma_{g_i}^{2})$, respectively.
For analysis tractability, symmetry of
the relay nodes is assumed in this paper, i.e.,
$\sigma_{f_i}=\sigma_{f},\forall i$ and $\sigma_{g_i}=\sigma_{g},\forall i$.

A general two-time slot TWRN protocol is used, as shown in Fig.~\ref{fig:SystemDiagram}.
In the first time slot, both $T_1$
and $T_2$ transmit their messages
and the relays $\{R_1, \cdots, R_N\}$ receive a superposition of the
signals transmitted from $T_1$
and $T_2$.
Let $\mathbf{s}(t)=[s_1(t),\cdots,s_T(t)]^T$ and $\mathbf{d}(t)=[d_1(t),\cdots,d_T(t)]^T$ denote the transmitted symbol vectors of  $T_1$  and  $T_2$ at time $t$, respectively.
They are normalized as $\mathbb{E}\{ \mathbf{s}(t) \mathbf{s}(t)^H\}=
    \mathbb{E}\{ \mathbf{d}(t) \mathbf{d}(t)^H\}=\mathbf{I}_T.$
The received signal vector at $R_i$  can be written as
\begin{equation}\label{ }
    \mathbf{r}_i(t)=\sqrt{P_1 }f_i(t)\, \mathbf{s}(t) + \sqrt{P_2 }g_i(t)\, \mathbf{d}(t) + \mathbf{v}_i(t),
\end{equation}
where $P_1$ and $P_2$ denote the transmit power of $T_1$ and $T_2$, respectively, and $\mathbf{v}_i(t)$ represents the  noise vector at $R_i$
and each noise term follows a zero-mean  complex additive white Gaussian distribution, i.e., $\mathbf{v}_i(t)\sim\mathcal{CN}(0,  N_0 \mathbf{I}_T)$.

During the second time slot, $R_i$ processes $\mathbf{r}_i(t)$ to generate a space time coded symbol vector $\mathbf{x}_i(t)$.
In this paper, we consider  the amplify-and-forward protocol in the relay nodes. The transmit signal at the $i$th relay is designed to be a linear function of its received signal and its conjugate\cite{Hassibi2002P1804}:
\begin{equation}\label{eq:x_Ri}
\begin{split}
  &\mathbf{x}_i(t)= \beta_i(t)\left(A_i \mathbf{r}_i(t) +B_i \mathbf{r}_i(t)^{*}\right),
\end{split}
\end{equation}
where $A_i$ and  $B_i$ are two
$T\times T$
complex matrices specifically
designed for the construction of distributed space-time codings, and $\beta_i(t)$ is
the scaling factor at $R_i$.

In this work, the scaling factor $\beta_i(t)$ in Eq. (\ref{eq:x_Ri}) can be obtained based on
the available statistical CSI, which is specifically given by\cite{Maham2009P2036}
 \begin{equation}\label{ }
\begin{split}
 \beta_i (t)=\sqrt{\frac{P_{R_i}}{\sigma_{f_i}^{2}P_1 +\sigma_{g_i}^{2}P_2 +N_0}},
\end{split}
\end{equation}
 where $P_{R_i}$ is the transmitted power of $R_i$. Since we assume $P_{R_i}=P_R$, we have   $\beta_i (t)=\beta, \forall i$.
For simplicity, in this paper, we only design the system  that either
$A_i$  is unitary, $B_i=\mathbf{0}_T$  (case I) or $B_i$  is unitary, $A_i=\mathbf{0}_T$ (case II). Thus, case I means that the $i$th column of the code matrix ($S(t)$ and $D(t)$ in Eq. (\ref{eq:y_2_special})) contains only the
transmitted symbols, and  case II  means that the $i$th column of the code matrix contains the linear combinations of the conjugate of the transmitted symbols only.
Further more, we assume that $T=N$, i.e., the number of symbols in a space-time block code is equal to the number of  relay nodes.
We further define
\begin{equation}\label{eq:notation_define}
\begin{split}
\begin{cases}
    & O_i \triangleq A_i,\; \hat{f}_i \triangleq {f}_i,\; \hat{g}_i \triangleq {g}_i,\;
      \hat{\mathbf{v}}_i(t)\triangleq \mathbf{v}_i(t), \\
     &\; \; \;  \hat{\mathbf{s}}_i(t) \triangleq \mathbf{s}(t), \; \hat{\mathbf{d}}_i(t) \triangleq \mathbf{d}(t), \;\;\;\;\;\;\;\;\;\;\;\;\;\; \;\;\;\; \; \; \;\; \text{if } B_i=\mathbf{0}_T,   \\   
    & O_i \triangleq B_i,\; \hat{f}_i \triangleq {f}_i^*,\; \hat{g}_i \triangleq {g}_i^*,\;
     \hat{\mathbf{v}}_i(t)\triangleq \mathbf{v}_i(t)^*, \\
     &\; \; \;   \hat{\mathbf{s}}_i(t) \triangleq \mathbf{s}(t)^*, \; \hat{\mathbf{d}}_i(t) \triangleq \mathbf{d}(t)^*, \;\;\;\;\;\;\;\;\;\;\;\;\;\; \;\;\;\; \; \text{if } A_i=\mathbf{0}_T. \\    
\end{cases}
\end{split}
\end{equation}

Then the relay node $R_i$ broadcasts the coded symbol
vector $\mathbf{x}_i(t)$ back to both source nodes.
Since $T_1$ and $T_2$ are mathematically symmetrical,
for simplicity, in the following, we only discuss the decoding
and the analysis for the signals received by $T_2$\cite{Song2010P2932}.
The received signal vectors at  $T_2$  is given by 
\begin{equation}\label{eq:y_2}
    \mathbf{y}_2(t)=\sum_{i=1}^{N} g_i(t)\mathbf{x}_i(t)+ \mathbf{w}_2(t),
\end{equation}
 where $\mathbf{w}_2(t)$
denotes the independent and identically  distributed~(i.i.d)  additive white Gaussian noise~(AWGN) vectors  at $T_2$, and we have  $ \mathbf{w}_2(t) \sim\mathcal{CN}(0, N_0 \mathbf{I}_{T}) $.

The received signal at $T_2$ can then be rewritten as:
\begin{equation}\label{eq:y_2_special}
\begin{split}
    \mathbf{y}_2(t)&= \sqrt{P_1 } \,S(t) \mathbf{h}_{12}(t)+\sqrt{P_2 } \,D(t) \mathbf{h}_{22}(t) + \mathbf{n}_2(t),
\end{split}
\end{equation}
where
\begin{equation}\label{}
\begin{split}
\begin{cases}
  &S(t)=\left[ O_1\hat{\mathbf{s}}_1 (t), \cdots,  O_N\hat{\mathbf{s}}_N(t)\right],\\
  &D(t)=[ O_1\hat{\mathbf{d}}_1 (t), \cdots,  O_N\hat{\mathbf{d}}_N (t)],\\
  &\mathbf{h}_{12}(t) = [\beta_1(t)\hat{f}_1(t)g_1(t),$ $ \cdots, \beta_N(t)\hat{f}_N(t)g_N(t)]^T,\\
  &\mathbf{h}_{22}(t) = [\beta_1(t)\hat{g}_1(t)g_1(t), \cdots, \beta_N(t)\hat{g}_N(t)g_N(t)]^T,\\
  &\mathbf{n}_2(t)= \sum_{i=1}^{N}  \beta_i(t)$ $g_i(t) O_i \hat{\mathbf{v}}_i(t)   + \mathbf{w}_2(t).
  \end{cases}
\end{split}
\end{equation}
It is easy to prove that
$    \mathbb{E}\{ \mathbf{n}_2(t) \mathbf{n}_2(t)^H \}
      =\sigma_{\mathbf{n}_2}^2(t)\mathbf{I}_N
$
, and $\sigma_{\mathbf{n}_2}^2(t)= ( \sum_{i=1}^{N} |\beta_i(t)|^2 $ $|g_i(t)|^2  +1 )  N_0$.

\section{Distributed Differential Space-Time Coding For TWRNs}\label{sec:DD-STC-ANC}
In this section, we propose a  distributed  differential scheme. 
First, we blindly estimate channel $\mathbf{h}_{22}(t)$ defined in
Eq. (\ref{eq:y_2_special}), which can be used to subtract the self-interference. Then,  a simple differential signal detector is developed to recover the desired signal at source $T_2$.

In the proposed DDSTC-ANC, $T_1$ encodes a message  at time $t$ into an $N \times N$ unitary matrix $U(t)$, which is then differentially encoded as $\mathbf{s}(t) = U(t)\cdot \mathbf{s}(t-1),$
where $\mathbf{s}(t-1)$ is the signal transmitted by  $T_1$  at time $t-1$. Similarly, $T_2$ differentially encodes a message  at time $t$  into an $N \times N$ unitary matrix $V(t)$, which is then differentially encoded as $\mathbf{d}(t) = V(t)\cdot \mathbf{d}(t-1).$

For the first block, we can transmit a known vector to both source nodes  that satisfies
$
 \mathbb{E}\{  \mathbf{s}(t)^H \mathbf{s}(t) \}=
    \mathbb{E}\{ \mathbf{d}(t)^H \mathbf{d}(t) \}= N,
$
for example, $[1\; 1\;  \cdots \; 1]^T$ or $ [\sqrt{N} \; 0 \; \cdots\;  0]^T$. Similar to the differential space-time coding for multiple-antenna systems, having $U(t)$ and $V(t)$ unitary preserves the transmit power.

For simplicity, we define $\hat{U}_i(t) \triangleq {U}(t)$ if $B_i=\mathbf{0}_T$,  and $\hat{U}_i(t) \triangleq {U}(t)^{*}$ if $A_i=\mathbf{0}_T$. In the distributed  differential scheme, the codes $U(t)$ and $V(t)$ should commute with the relay matrices\cite{Jing2008P1092},  i.e., \footnote{More properties about the differential space-time
coding can be found in\cite{Hughes2000P2567,Hochwald2000P2041,Tarokh2000P1169,Jafarkhani2005P}.}
\begin{equation*}\label{eq:U_t}
\begin{split}
  O_i \hat{U}_i(t)= U(t) O_i,
\end{split}
\end{equation*}
or equivalently,
\begin{equation}\label{eq:U_t_2}
\begin{split}
\begin{cases}
  &A_i  U(t)= U(t) A_i, \;\; \text{if } B_i=\mathbf{0}_T,  \\   
  &B_i  U^*(t)= U(t)  B_i, \;\; \text{if } A_i=\mathbf{0}_T.  \\   
  \end{cases}
\end{split}
\end{equation}
Hence,
$S(t)$ can be rewritten as
\begin{equation}\label{eq:s_t_dif}
\begin{split}
   S(t) &=\left[ O_1 \hat{U}_1(t) \hat{\mathbf{s}}_1 (t-1), \cdots,  O_N \hat{U}_N(t) \hat{\mathbf{s}}_N (t-1)\right]\\
&=U(t)\cdot \left( O_1  \hat{\mathbf{s}}_1 (t-1), \cdots,  O_N \hat{\mathbf{s}}_N (t-1)\right)\\
 &=U(t)\cdot S(t-1).
\end{split}
\end{equation}
Similarly, we have $ D(t) = V(t)\cdot D(t-1).$

The distributed differential space-time codes~(STC)  for TWRNs should be designed to satisfy  Eq. (\ref{eq:U_t_2}).
The design and choice of appropriate codes
is beyond the scope of this work,
here, we only briefly introduce
some existing  STCs that can be used in TWRNs.
For the TWRNs with two relays, we can use Alamouti code
\cite{Alamouti1998P1451}, which has full diversity and linear decoding complexity.
Square real orthogonal codes~(SORCs), which  have full diversity and linear decoding complexity, were proposed in \cite{Jing2008P1092} for two, four and eight  antennas systems.

\begin{theorem}\label{theorem:h_22}
If the relay matrices have the property:
 $\text{tr}\{ O_j O_i^H\}=N$ for $i=j$,
 $\text{tr}\{ O_j O_i^H\}=0$ for $i\neq j$, we have
\begin{equation}\label{ }
\begin{split}
    \mathbb{E}\{ D(t)^H \mathbf{y}_2(t)\} &= \sqrt{P_2 }  N\, \mathbf{h}_{22}(t),
\end{split}
\end{equation}
and $\mathbf{h}_{22}(t)$ can be approximated  as
\begin{equation}\label{eq:h_22_aprox}
\begin{split}
\mathbf{h}_{22}(t)  \approx
\frac{1}{N L}\frac{1}{\sqrt{P_2 }} \sum_{l=1}^{L}  D(t-l)^H \mathbf{y}_2(t-l),
\end{split}
\end{equation}\\
where $L$ denotes the number of STC symbols in a frame.
\end{theorem}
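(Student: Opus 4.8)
The plan is to first establish the exact identity by substituting the signal model~(\ref{eq:y_2_special}) into $D(t)^H\mathbf{y}_2(t)$ and taking the expectation over the two data streams and the noise, conditioned on the channel realizations, and then to obtain the approximation~(\ref{eq:h_22_aprox}) by averaging this identity over the $L$ preceding blocks of the current frame and replacing the resulting ensemble average by the corresponding time average.

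\emph{Exact identity.} From~(\ref{eq:y_2_special}),
\begin{equation*}
D(t)^H\mathbf{y}_2(t) = \sqrt{P_1}\,D(t)^H S(t)\,\mathbf{h}_{12}(t) + \sqrt{P_2}\,D(t)^H D(t)\,\mathbf{h}_{22}(t) + D(t)^H\mathbf{n}_2(t),
\end{equation*}
and I would evaluate the three terms separately. For $D(t)^H D(t)$, its $(j,k)$ entry is $\hat{\mathbf{d}}_j(t)^H O_j^H O_k\,\hat{\mathbf{d}}_k(t)$; since each $\hat{\mathbf{d}}_i(t)$ equals $\mathbf{d}(t)$ or $\mathbf{d}(t)^*$ and the data symbols satisfy $\mathbb{E}\{\mathbf{d}(t)\mathbf{d}(t)^H\}=\mathbf{I}_N$ together with $\mathbb{E}\{\mathbf{d}(t)\mathbf{d}(t)^T\}=\mathbf{0}$, the expectation of this entry reduces to $\text{tr}\{O_j^H O_k\}$ when $j$ and $k$ are of the same type (both conjugated or both not) and to $0$ otherwise; by the hypothesis $\text{tr}\{O_jO_i^H\}=N$ for $i=j$ and $0$ for $i\neq j$, and the cyclic invariance of the trace, this gives $\mathbb{E}\{D(t)^H D(t)\}=N\mathbf{I}_N$. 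For $D(t)^H S(t)$, the $(j,k)$ entry $\hat{\mathbf{d}}_j(t)^H O_j^H O_k\,\hat{\mathbf{s}}_k(t)$ has zero mean because the two streams $\mathbf{s}$ and $\mathbf{d}$ are independent and the transmitted symbols are zero-mean, so $\mathbb{E}\{D(t)^H S(t)\}=\mathbf{0}$ and the $\mathbf{h}_{12}(t)$ term drops out. Finally, $D(t)$ depends only on the data while $\mathbf{n}_2(t)$ is a function of the relay and destination noises, which are independent of the data and zero-mean, so $\mathbb{E}\{D(t)^H\mathbf{n}_2(t)\}=\mathbf{0}$. Adding the three contributions yields $\mathbb{E}\{D(t)^H\mathbf{y}_2(t)\}=\sqrt{P_2}\,N\,\mathbf{h}_{22}(t)$.

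\emph{Approximation.} Under the quasi-static model the coefficients $f_i$, $g_i$ and the scaling factor $\beta$ are frozen over one frame, so $\mathbf{h}_{22}(t-l)=\mathbf{h}_{22}(t)$ for every $l$ in the same frame; moreover the differential structure $D(\tau)=V(\tau)D(\tau-1)$ preserves the column statistics of $D(\tau)$, so the identity just proved holds at every past index, $\mathbb{E}\{D(t-l)^H\mathbf{y}_2(t-l)\}=\sqrt{P_2}\,N\,\mathbf{h}_{22}(t)$ for $l=1,\dots,L$. Averaging over $l$ and then replacing the ensemble average by the empirical average over the $L$ observed blocks---valid for large $L$ because the channel is constant over the window and the data sequence is ergodic---gives $\mathbf{h}_{22}(t)\approx\frac{1}{NL\sqrt{P_2}}\sum_{l=1}^{L}D(t-l)^H\mathbf{y}_2(t-l)$, which is~(\ref{eq:h_22_aprox}).

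The step I expect to be the main obstacle is justifying $\mathbb{E}\{D(t)^H S(t)\}=\mathbf{0}$ cleanly: it relies on the transmitted-symbol vectors being zero-mean (and circularly symmetric), a property that must be argued from the structure of the differential codebook and the choice of the reference block rather than from the normalization $\mathbb{E}\{\mathbf{d}(t)\mathbf{d}(t)^H\}=\mathbf{I}_N$ alone; and, for the second part, quantifying the quality of the ``$\approx$'' would require bounding both the fluctuation of a length-$L$ sample mean and any residual channel drift over the averaging window.
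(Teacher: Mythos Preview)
Your proposal is correct and follows exactly the approach the paper indicates: the paper's own proof is simply the one-line remark ``It can be proved by direct matrix multiplication and expectation,'' with all details omitted for space. Your decomposition into the three terms $D(t)^H S(t)$, $D(t)^H D(t)$, and $D(t)^H\mathbf{n}_2(t)$, followed by the ergodic replacement of the ensemble mean by the $L$-block time average under the quasi-static assumption, is precisely the ``direct matrix multiplication and expectation'' the authors have in mind, so you have in fact supplied the argument they left out.
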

\begin{proof}[Proof]
It can be proved by direct matrix multiplication and expectation. Due to the limited space, we omit the details.
\end{proof}

We note that since receiver $T_2$ knows the symbols $\mathbf{d}(t)$ sent by itself, using the blindly estimated channel $\mathbf{h}_{22}(t)$, we can subtract the self-interference at $T_2$  without using pilot symbols at the beginning.
 %
Although we can  blindly estimate channel $\mathbf{h}_{22}(t)$, $T_2$ does not have any CSI of $\mathbf{h}_{12}(t)$. Then based on the above theorem, a simple differential signal detector is developed to recover the desired signal $\mathbf{s}(t)$ at source $T_2$.
In the later performance analysis section, we assume that $\mathbf{h}_{22}(t)$ is perfectly cancelled.
Most of papers on distributed STCs for TWRNs also assume perfect self-interference cancellation, such as \cite{Wang2010P5331} and \cite{Song2011P1954} for coherent systems and \cite{Song2010P3933} and \cite{Song2010P2932} for differential systems.
However, in practice, the estimation error will introduce some performance degradation which depends on estimation accuracy of $\mathbf{h}_{12}(t)$.
The estimated $\mathbf{h}_{22}(t)$ is used in simulations in this paper. In the simulation section, we have simulated the proposed scheme using the estimated $\mathbf{h}_{22}(t)$  and the results show that  the performance loss due to the $\mathbf{h}_{22}(t)$ estimation error is negligible.

By using Eq. (\ref{eq:s_t_dif}) and Eq. (\ref{eq:h_22_aprox}) and the assumption of $\mathbf{h}_{12}(t)=\mathbf{h}_{12}(t-1)$, we have
\begin{equation}\label{eq:y_2_dif}
\begin{split}
    \tilde{\mathbf{y}}_2(t)
&=\mathbf{y}_2(t)- \sqrt{P_2 } \,D(t)\mathbf{h}_{22} (t) \\
&=\sqrt{P_1 } \,S(t) \mathbf{h}_{12}(t) + \mathbf{n}_2(t) \\
&=   U(t)\tilde{\mathbf{y}}_2(t-1)  +\tilde{\mathbf{n}}_2(t),
\end{split}
\end{equation}
where $ \tilde{\mathbf{n}}_2(t)= \mathbf{n}_2(t)-U(t)\mathbf{n}_2(t-1)$.
Note that $\mathbb{E}\{ U(t) U(t)^H \} = \mathbf{I}_N$, and $\mathbf{n}_2(t)$ and $\mathbf{n}_2(t-1)$  are independent complex Gaussian random vectors
with zero mean and covariance $\sigma_{\mathbf{n }_2}^2(t)$.
We have
$
\mathbb{E}\{ \tilde{\mathbf{n}}_2(t) \tilde{\mathbf{n}}_2(t)^H \}
= \sigma_{\tilde{\mathbf{n}}_2}^2(t) \mathbf{I}_T
$,
where
$\sigma_{\tilde{\mathbf{n}}_2}^2(t) =2( \sum_{i=1}^{N} |\beta_i(t)|^2 |g_i(t)|^2 +1 )  N_0
$.
Thus,  $\tilde{\mathbf{n}}_2(t)$
is a Gaussian random vector with zero mean and covariance $\sigma_{\tilde{\mathbf{n}}_2}^2(t)$.

Hence, the least square~(LS) decoder can be performed to recover the transmitted signal
\begin{equation}\label{ }
\begin{split}
  \arg\min_{U_k(t)}\|\tilde{\mathbf{y}}_2(t)-U_k(t)\tilde{\mathbf{y}}_2(t-1)\|.
\end{split}
\end{equation}



\section{Pairwise Error Probability and Block Error Rate Analysis}\label{sec:PEP_FER}
In this section, we derive the PEP and the BLER of the proposed DDSTC-ANC scheme. Asymptotic diversity order is also  analyzed in this section.
\subsection{Pairwise Error Probability}\label{subsec:PEP}
For simplicity, we define
$
 U_{\Delta,kj}(t) = U_k(t)- U_j(t)
$
and
$
S_{\Delta,kj}(t) = S_k(t)- S_j(t)
$.
The PEP  of mistaking the $k$th STC block by the $j$th STC block  can be evaluated by averaging the conditional PEP over the channel statistics, i.e., $f_i,g_i$, and we have\footnote{The superscript ``d" denotes differential scheme and ``c" represents coherent scheme.}\cite{Tse2005P}
\begin{equation}\label{eq:Q_function_PEP}
\begin{split}
P_{kj}^{d}(\gamma)
&=\mathbb{E}_{f_i,g_i}\left[Q\left(\sqrt{\frac{\|U_{\Delta,kj}(t) \tilde{\mathbf{y}}_2(t-1)\|^2}{2\sigma_{\tilde{\mathbf{n}}_2}^2(t) }}\right)\right],
\end{split}
\end{equation}
where $\gamma=\frac{P}{N_0}$ is signal-to-noise ratio~(SNR), $P$ is the total power in the TWRN
and $Q(x)=\frac{1}{\sqrt{2\pi}}\int_x^{\infty}\exp(-\frac{t^2}{2})\text{d}t$ is the Gaussian Q-function.
Since it is very difficult  to analyse  $\tilde{\mathbf{y}}_2(t-1)$ directly,
we approximate it using Eq. (\ref{eq:y_2_dif}) as
$
    \tilde{\mathbf{y}}_2(t)
 \approx  \sqrt{P_1 } \,S(t) \mathbf{h}_{12}(t)
 $.
This approximation is particularly accurate at high SNR.
Then, based on Eq. (\ref{eq:s_t_dif}), we have
$
S_{\Delta,ij}(t)
=U_{\Delta,ij}(t) S(t-1)
$
. We further assume
$
\mathbf{h}_{12}(t-1)\approx\mathbf{h}_{12}(t)
$
. Then, Eq. (\ref{eq:Q_function_PEP})
 can be further simplified as
\begin{equation}\label{eq:Q_function_PEP_diff}
\begin{split}
P_{kj}^{d}(\gamma)
&\approx \mathbb{E}_{f_i,g_i}\;Q\left(\sqrt {\frac{P_1    \|S_{\Delta,kj}(t) \mathbf{h}_{12}(t)\|^2 }{2 \sigma_{\tilde{\mathbf{n}}_2}^2(t)}}\right).
\end{split}
\end{equation}
Similarly, the PEP for the coherent scheme can be derived as
\begin{equation}\label{eq:Q_function_PEP_coh}
\begin{split}
P_{kj}^{c}(\gamma)
&=\mathbb{E}_{f_i,g_i} \;Q\left(\sqrt {\frac{P_1    \|S_{\Delta,kj}(t) \mathbf{h}_{12}(t)\|^2  }{2 \sigma_{\mathbf{n }_2}^2(t)}}\right).
\end{split}
\end{equation}

Since $ \sigma_{\tilde{\mathbf{n}}_2}^2(t) =2 \sigma_{\mathbf{n }_2}^2(t)$, the distributed differential scheme in TWRN  is supposed to have  $3$ dB loss in coding gain compared to distributed coherent scheme.

Before deriving the PEP, we first define
$
 \mathbf{h}_{12}(t)= \beta G(t)\hat{\mathbf{f}}(t)
$
, where
$
 \hat{\mathbf{f}}(t) =[\hat{f}_1(t), \cdots, \hat{f}_N(t)]^T
$
and
$
 G(t)=diag\{g_1(t), \cdots, g_N(t)\}
$. Then, we have the following lemmas.

\begin{lemma} \label{lemma:pdf_f}
The probability density function~(PDF) of $\hat{\mathbf{f}}(t)$ can be derived as
\begin{equation}\label{eq:pdf_f}
\begin{split}
p\left(\hat{\mathbf{f}}(t)\right) =\frac{1}{\pi^N  \sigma_{f}^{2N}}\exp\left(-\frac{\hat{\mathbf{f}}(t)^H \hat{\mathbf{f}}(t)}{\sigma_{f}^{2}}\right).
\end{split}
\end{equation}
\end{lemma}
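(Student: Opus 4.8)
The plan is to recognize that Lemma~\ref{lemma:pdf_f} is really a statement about the distribution of a vector whose entries are either the original Rayleigh-fading coefficients $f_i$ or their complex conjugates $f_i^*$, depending on whether relay $R_i$ is in case~I or case~II. First I would note that by assumption each $f_i \sim \mathcal{CN}(0,\sigma_f^2)$ and, by the symmetry hypothesis on the relay nodes, all of them share the common variance $\sigma_f^2$; moreover the $f_i$ are mutually independent across relays since the channels $T_1 - R_i$ are independent. The key observation is that the circularly-symmetric complex Gaussian distribution is invariant under conjugation: if $f_i \sim \mathcal{CN}(0,\sigma_f^2)$ then $f_i^* \sim \mathcal{CN}(0,\sigma_f^2)$ as well, with the same density. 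Hence each component $\hat{f}_i(t)$, whether defined as $f_i(t)$ or $f_i(t)^*$ via Eq.~(\ref{eq:notation_define}), is marginally $\mathcal{CN}(0,\sigma_f^2)$.

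Next I would assemble the joint density. Because the $\hat{f}_i(t)$ are independent (conjugation does not introduce dependence, and the underlying $f_i$ are independent) and identically distributed as $\mathcal{CN}(0,\sigma_f^2)$, the joint PDF factorizes as the product of the $N$ marginal complex-Gaussian densities,
\begin{equation*}
p\bigl(\hat{\mathbf{f}}(t)\bigr) = \prod_{i=1}^{N} \frac{1}{\pi \sigma_f^2}\exp\left(-\frac{|\hat{f}_i(t)|^2}{\sigma_f^2}\right) = \frac{1}{\pi^N \sigma_f^{2N}}\exp\left(-\frac{\sum_{i=1}^N |\hat{f}_i(t)|^2}{\sigma_f^2}\right),
\end{equation*}
and then I would rewrite $\sum_{i=1}^N |\hat{f}_i(t)|^2 = \hat{\mathbf{f}}(t)^H \hat{\mathbf{f}}(t)$ to obtain exactly Eq.~(\ref{eq:pdf_f}). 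This is essentially a one-line argument once the invariance-under-conjugation and independence facts are in place.

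The only subtlety worth spelling out — and what I would treat as the ``main obstacle,'' though it is minor — is justifying that conjugation preserves the density and the independence structure. For a scalar $z = x + jy \sim \mathcal{CN}(0,\sigma^2)$, the real and imaginary parts $x,y$ are i.i.d.\ real Gaussians with variance $\sigma^2/2$; conjugation sends $(x,y)\mapsto(x,-y)$, a measure-preserving linear map, so $z^*$ has the identical distribution, and $|z^*|^2 = |z|^2$ makes this transparent at the density level. Applying this coordinatewise, and using that a function applied separately to each of a collection of independent random variables yields independent outputs, gives joint independence of the $\hat{f}_i(t)$. With these facts the factorization above is immediate, completing the proof.
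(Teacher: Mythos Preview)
Your proposal is correct and follows essentially the same approach as the paper: the paper's proof simply observes that $f_i^*(t)\sim\mathcal{CN}(0,\sigma_f^2)$ whenever $f_i(t)\sim\mathcal{CN}(0,\sigma_f^2)$, so each $\hat{f}_i(t)\sim\mathcal{CN}(0,\sigma_f^2)$, and then invokes independence of the $\hat{f}_i(t)$ to obtain the product density. Your write-up is more detailed (explicitly justifying conjugation-invariance and the preservation of independence) but the underlying argument is identical.
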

\begin{proof}[Proof]
Since $f_i(t) \sim \mathcal{CN}(0, \sigma_{f}^{2})$, we can prove  that $f_i^{*}(t) \sim \mathcal{CN}(0, \sigma_{f}^{2})$. Hence, $\hat{f}_i(t) \sim \mathcal{CN}(0, \sigma_{f}^{2})$. Note that $\hat{f}_1(t), \cdots, \hat{f}_N(t)$ are independent, we can easily derive Eq.~(\ref{eq:pdf_f}).
\end{proof}

\begin{lemma}\label{lemma:pdf_int}
 $B$ represents an $n\times n$ Hermitian matrix (i.e., $B^H=B$), and  $\mathbf{x}$ is an $n\times 1$ complex vector. We have
\begin{equation}\label{}
\begin{split}
 \int_{\mathcal{C}^n}  \exp\left(- \mathbf{x}^H B \mathbf{x} \right) \text{d}\mathbf{x}
 = \pi^{n}\text{det}^{-1}(B).
\end{split}
\end{equation}
\end{lemma}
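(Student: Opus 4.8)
The plan is to exploit the spectral decomposition of the Hermitian matrix $B$ and thereby reduce the $n$-dimensional complex Gaussian integral to a product of elementary scalar integrals. One preliminary remark is in order: for the left-hand side to be finite, $B$ must in fact be positive definite, which is the situation in every place where the lemma is later invoked (there $B$ arises from a noise-normalized quadratic form); I assume this throughout.

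First I would invoke the spectral theorem to write $B = \mathbf{U}\,\text{diag}\{\lambda_1,\ldots,\lambda_n\}\,\mathbf{U}^H$, with $\mathbf{U}$ unitary and $\lambda_1,\ldots,\lambda_n > 0$ the real eigenvalues of $B$. Next I would substitute $\mathbf{y} = \mathbf{U}^H\mathbf{x}$. Identifying $\mathcal{C}^n$ with $\mathbb{R}^{2n}$ equipped with Lebesgue measure, a complex-linear unitary $n\times n$ map acts as a real $2n\times 2n$ orthogonal transformation, so its Jacobian determinant has absolute value one; hence $\text{d}\mathbf{x}=\text{d}\mathbf{y}$ and the region of integration is unchanged. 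Under this substitution $\mathbf{x}^H B \mathbf{x} = \mathbf{y}^H\,\text{diag}\{\lambda_i\}\,\mathbf{y} = \sum_{i=1}^{n}\lambda_i|y_i|^2$, so the integrand separates and the integral becomes $\prod_{i=1}^{n}\int_{\mathcal{C}}\exp(-\lambda_i|y_i|^2)\,\text{d}y_i$. Writing $y_i = a+jb$ with $a,b\in\mathbb{R}$, each factor equals $\big(\int_{\mathbb{R}}e^{-\lambda_i a^2}\,\text{d}a\big)^2 = \pi/\lambda_i$, and multiplying yields $\pi^n/\prod_{i=1}^{n}\lambda_i = \pi^n\det^{-1}(B)$, since $\det B = \prod_i\lambda_i$. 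The same conclusion also follows, with more bookkeeping, by splitting $\mathbf{x}$ and $B$ into real and imaginary parts and invoking the standard real multivariate Gaussian integral.

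The step I would treat most carefully is the change of variables: one must state explicitly that $\text{d}\mathbf{x}$ denotes the real $2n$-dimensional Lebesgue measure and that the complex-linear unitary map corresponds to a real orthogonal transformation of $\mathbb{R}^{2n}$, so that the Jacobian factor is exactly one and the factorization in Step~3 is legitimate. Everything else is a routine one-dimensional computation.
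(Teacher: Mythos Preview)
Your argument is correct: the spectral decomposition of the Hermitian (positive definite) matrix $B$, followed by the unitary change of variables with unit Jacobian and the factorization into one-dimensional complex Gaussian integrals, is the standard and complete derivation of this identity. Your caveat that positive definiteness is required for convergence is also well taken.

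The paper, however, does not actually prove the lemma: its entire ``proof'' is a pointer to \cite{Dogandzic2003P1327}. So your proposal is strictly more informative than what appears in the paper. What you gain is a self-contained argument that makes the role of the eigenvalues $\lambda_i$ explicit---which is exactly how the lemma is used downstream in the PEP derivation (Eq.~(\ref{eq:Qfunc_PEP})), where the determinant is expanded as $\prod_i(1+l(\theta,t)\lambda_i|g_i(t)|^2)$. The paper's approach buys brevity at the cost of relying on an external reference for a fact that, as you show, takes only a few lines to establish.
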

\begin{proof}[Proof]
Please see \cite{Dogandzic2003P1327}.
\end{proof}

Note that the canonical representation of Gaussian Q-function is in the form of a semi-infinite integral, which makes analysis very difficult.  Here, we use an alternative representation of the Gaussian Q-function from \cite[Eq. (4.2)]{Simon2005P} as
$ Q(x)
 = \frac{1}{\pi}\int_0^{\pi/2}
 \exp\left(-\frac{x^2}{2sin^2\theta}\right)\text{d}\theta
$.

Then, by doing some manipulations, we have
\begin{equation}\label{eq:Qfunc_PEP}
\begin{split}
P_{kj}^{d}(\gamma)
&= \mathbb{E}_{f_i,g_i}\;\frac{1}{\pi}\int_0^{\pi/2}\exp
\left[-\frac{\hat{\mathbf{f}}(t)^H K(t) \hat{\mathbf{f}}(t)  }{2 \sin^2\theta}\right]
\text{d}\theta
\\
&= \frac{1}{\pi}\int_0^{\pi/2}\mathbb{E}_{g_i}\;   \left[ \text{det}(\mathbf{I} + K^{\prime}(\theta,t) \right]^{-1}    \text{d}\theta
\\
&= \frac{1}{\pi}\int_0^{\pi/2}\mathbb{E}_{g_i}\;
\left[\prod_{i=1}^{N} \left(1+l(\theta,t)
 \lambda_i  |g_i(t)|^2\right)
\right]^{-1}    \text{d}\theta,
 \end{split}
\end{equation}
where
$
K(t)
=\frac{P_1 |\beta|^2      G(t)^H   S_{\Delta,kj}(t) ^H  S_{\Delta,kj}(t) G(t)   }{4( \sum_{i=1}^{N} |\beta|^2 |g_i(t)|^2 +1 )  N_0  }
$,
$
K^{\prime}(\theta, t)
=\frac{ \sigma_{f}^{2} K(t)}{2 \sin^2\theta}
$,
$
l(\theta, t)
 =\frac{P_1 |\beta|^2  \sigma_{f}^{2}  }{8( \sum_{i=1}^{N} |\beta|^2 |g_i(t)|^2 +1 )  N_0 \sin^2\theta}
$,
and
$\lambda_i$,  $i\in \{1, \cdots, N\}$, denotes the singular value of $S_{\Delta,kj}(t) ^H  S_{\Delta,kj}(t)$.
The second step of the equation is based on the Lemma \ref{lemma:pdf_f} and Lemma \ref{lemma:pdf_int}. 

Note that the mean of $|g_i(t)|^2$ is $\sigma_{g}^{2}$. It is reasonable to approximate the term $\sum_{i=1}^{N} |g_i(t)|^2$ in $l(\theta, t)$, by $\sum_{i=1}^{N} |g_i(t)|^2 \approx N \sigma_{g}^{2}$, especially for large $N$ (by the law of large numbers)\cite{Jing2006P3524,Maham2009P2036}. Hence,
\begin{equation}\label{eq:approx_l_theta}
\begin{split}
l(\theta, t)
&\approx l^{\prime}(\theta)=\frac{P_1 |\beta|^2  \sigma_{f}^{2} }{8( N |\beta|^2 \sigma_{g}^{2} +1 )  N_0 \sin^2\theta}.
 \end{split}
\end{equation}

Let $|g_i(t)|^2=\gamma_i(t)$. Since  $g_i \sim \mathcal{CN}(0, \sigma_{g}^{2})$, the PDF of $\gamma_i(t)$ can be obtained as $p\left(\gamma_i(t)\right) =\frac{1}{\sigma_{g}^2}\exp\left(-\frac{\gamma_i(t)}{\sigma_{g}^2}\right).$   
 Hence, after doing some manipulations, the MGF-based 
PEP expression is derived as
\begin{equation}\label{eq:closed-form PEP}
\begin{split}
P_{kj}^{d}(\gamma)
&\approx \frac{1}{\pi}\int_0^{\pi/2}\mathbb{E}_{g_i}\;
\left[\prod_{i=1}^{N} \left(1+l^{\prime}(\theta)
 \lambda_i  |g_i(t)|^2\right)
\right]^{-1}    \text{d}\theta
\\
 &=\frac{1}{\pi} \int_0^{\pi/2}
\prod_{i=1}^{N}
\left[
-(\frac{\sin^2\theta}{M_i }) \exp\left({ \frac{\sin^2\theta}{M_i }}\right) \right.\\
&\left. \qquad \qquad \qquad \qquad \qquad \quad \times \mathbf{Ei}\left(-\frac{\sin^2\theta}{M_i  }\right)
\right]
 \text{d}\theta,
 \end{split}
\end{equation}
where
$
M_i
=\frac{P_1 |\beta|^2  \sigma_{f}^{2}\sigma_{g}^{2} }{8( N |\beta|^2 \sigma_{g}^{2} +1 )  N_0 }\lambda_i
$
and
$
\mathbf{Ei}\left(x\right)=\int_{-\infty}^{x}\frac{e^t}{t}\text{d}t 
$, for $x<0$,
is the exponential integral function\cite[8.211.1]{Gradshteyn2007P}.

Next let us   derive the simplified PEP expression at high SNR.
Note that\cite[8.214.1]{Gradshteyn2007P} $\mathbf{Ei}\left(x\right)=\mathbf{C}+\ln(-x)+\sum_{k=1}^{\infty}\frac{x^k}{k\cdot k!}, \; [x<0],$
where $\mathbf{C}$ is Euler’s constant and $\mathbf{C}\approx 0.577$\cite[9.73]{Gradshteyn2007P}.
When $x$ tends to $0$, the exponential integral function
can be approximated as
$
\mathbf{Ei}\left(x\right)\approx  \ln(-x)
$, for $x<0$.
At high SNR, we have
$
\exp\left({ \frac{\sin^2\theta}{M_i  }}\right) \approx 1
$
and using the approximation for  the exponential integral function, we have
\begin{equation}\label{}
\begin{split}
 P_{kj}^{d}(\gamma)
   \approx\frac{1}{\pi} \int_0^{\pi/2}
\prod_{i=1}^{N}
&\left[
-(\frac{\sin^2\theta}{M_i }) \right. \\
&\qquad \left. \times
\left(
2\ln\left( \sin \theta \right)
+  \ln\left( \frac{1}{M_i  }\right)
\right)
\right]
 \text{d}\theta.
 \end{split}
\end{equation}
Note that $\int_0^{\frac{\pi}{2}}\ln \sin x\,\text{d}x=-\frac{\pi}{2}\ln 2$\cite[4.224.3]{Gradshteyn2007P}.
Hence the $\ln (\sin \theta)$ can be ignored, especially at high SNR.
Using\cite[3.621.3]{Gradshteyn2007P},  we  have $\int_0^{\frac{\pi}{2}} \sin^{2m}x \text{d}x=\frac{(2m-1)!!}{(2m)!!}\frac{\pi}{2}$.
The PEP can be further simplified as
\begin{equation}\label{eq:simplified PEP}
\begin{split}
P_{kj}^{d}(\gamma)
&\lessapprox
 \frac{1}{2} \frac{(2N-1)!!}{(2N)!!}
\prod_{i=1}^{N}
\left[
 (\frac{1}{M_i })
\ln\left(  M_i  \right)
\right].
\end{split}
\end{equation}

Finally, we derive the well-known Chernoff-bound-based PEP expression. From Eq. (\ref{eq:Qfunc_PEP}), setting $\theta=\frac{\pi}{2}$, and doing some manipulations, the Chernoff-bound-based PEP expression is given as
\begin{equation}\label{eq:cherrnoff}
\begin{split}
P_{kj}^{d}(\gamma)
&\leqslant
\frac{1}{2} \,\mathbb{E}_{f_i,g_i}\;  \exp
\left[-\frac{\hat{\mathbf{f}}(t)^H K(t) \hat{\mathbf{f}}(t)  }{2 }\right]\\
 &=
 \frac{1}{2}
\prod_{i=1}^{N}
\left[
 (\frac{1}{M_i })
\ln\left(  M_i  \right)
\right].
 \end{split}
\end{equation}

The average BLER can be obtained based on the well-known union bound as
\begin{equation}\label{eq:FER}
\begin{split}
P_{BLER} ^{d}(\gamma) \leqslant \sum_{U_k\in\mathcal {U}} \sum_{\; j,\; j \neq k} Pr(U_k)P_{kj}^{d}(\gamma).
\end{split}
\end{equation}

\subsection{Diversity Order}\label{subsec:diversity}
In this subsection, we analyze the asymptotic diversity order of the proposed DDSTC-ANC scheme.
Firstly, we define the total transmission power is $N \cdot P$. Note that $N\cdot P=N\cdot P_1+N\cdot P_2+N^2\cdot P_{Ri}$, $P_1=\alpha_1 P$, and $P_2=\alpha_2 P$.  Denote the SNR $\gamma=\frac{P}{N_0}$. Then, we rewrite
$M_i$ at high SNR as
$M_i=C \lambda_i\, \gamma$,
where
$
C =\frac{\alpha_1 \frac{(1-\alpha_1-\alpha_2)}{N} \sigma_{f}^{2}\sigma_{g}^{2} }{8( (1-\alpha_1-\alpha_2) \sigma_{g}^{2} +\alpha_1\sigma_{f}^{2}  +\alpha_2\sigma_{g}^{2}  +1/\gamma  )     }
\approx \frac{\alpha_1 \frac{(1-\alpha_1-\alpha_2)}{N} \sigma_{f}^{2}\sigma_{g}^{2} }{8( (1-\alpha_1-\alpha_2) \sigma_{g}^{2} +\alpha_1\sigma_{f}^{2}  +\alpha_2\sigma_{g}^{2}    )     }
$.
Thus, the simplified PEP at high SNR can be rewritten as
 \begin{equation}\label{ }
\begin{split}
  P_{kj}^{d}(\gamma)
  &\approx
  \frac{1}{2} \frac{(2N-1)!!}{(2N)!!}
  \frac{1}{ \prod_{i=1}^{N} C  \lambda_i}
    \gamma ^{-N}   \prod_{i=1}^{N}\left( \ln(C  \lambda_i)+\ln(\gamma ) \right)\\
  &\approx  (C^{\prime}  \,\gamma )^{-N} \left[\ln(\gamma ) \right]^N,
\end{split}
\end{equation}
where $C^{\prime}  =\left(\frac{1}{2} \frac{(2N-1)!!}{(2N)!!}
\frac{1}{ \prod_{i=1}^{N} C \lambda_i}\right)^{-\frac{1}{N}}$.
 When $S_{\Delta,kj}(t)S_{\Delta,kj}(t)^H$ is full rank, the diversity can be  obtained as\cite{Zheng2003P1073}
 \begin{equation}\label{ }
\begin{split}
d=\lim_{\gamma \rightarrow\infty}-\frac{\log(P_{k,j}^d(\gamma))}{\log(\gamma)}=N\left(1-\frac{\log\log(\gamma)}{\log(\gamma)}\right).
\end{split}
\end{equation}
Thus, the diversity of the proposed DDSTC-ANC scheme for TWRNs
is $N\left(\frac{1-\log\log(\gamma)}{\log(\gamma)}\right)$.


\section{Optimum Power Allocation}\label{sec:OPA}

In this section, we derive the OPA between the source nodes and the relay nodes that minimizes the total PEP in  the TWRNs. Because the MGF-based  PEP expression is very hard to analyze and gives little insight, we use the simplified PEP expression to derive the OPA.    Here, we consider the total PEP in the TWRNs, and denote the PEP in $T_1$ and $T_2$ as $P_{ij}^{d,1}(\gamma)$ and $P_{ij}^{d,2}(\gamma)$, respectively.  $C$ in Subsection \ref{subsec:diversity} is rewritten as $C_{T_1}$ and $C_{T_2}$ for  $T_1$ and $T_2$, respectively. Hence, we have
   \begin{equation}\label{eq:total PEP}
\begin{split}
  P_{ij}^{d,1}(\gamma)+  P_{ij}^{d,2}(\gamma)
      &\approx
  \frac{1}{2} \frac{(2N-1)!!}{(2N)!!}\frac{1}{ \prod_{i=1}^{N}    \lambda_i}\\
  &\qquad \times
    \left(C_{T_1}^{-N}+C_{T_2}^{-N}\right)
    \gamma ^{-N} \left[\ln(\gamma ) \right]^N,
\end{split}
\end{equation}
where
\begin{equation*}C_{T_1} \approx \frac{\alpha_2 \frac{(1-\alpha_1-\alpha_2)}{N} \sigma_{f}^{2}\sigma_{g}^{2} }{8( (1-\alpha_1-\alpha_2) \sigma_{f}^{2} +\alpha_1\sigma_{f}^{2}  +\alpha_2\sigma_{g}^{2}    )     }\end{equation*}
and
\begin{equation*}C_{T_2} \approx \frac{\alpha_1 \frac{(1-\alpha_1-\alpha_2)}{N} \sigma_{f}^{2}\sigma_{g}^{2} }{8( (1-\alpha_1-\alpha_2) \sigma_{g}^{2} +\alpha_1\sigma_{f}^{2}  +\alpha_2\sigma_{g}^{2}    ).     }\end{equation*} 
It is obvious that to minimize the PEP at high SNR, we should minimize the $C_{T_1}^{-N}+C_{T_2}^{-N}$ in Eq. (\ref{eq:total PEP}) .i.e.,
 \begin{equation}\label{eq:OPA}
\begin{split}
 &\min_{\alpha_1,\alpha_2}\{C_{T_1}^{-N}+C_{T_2}^{-N}\},
 \qquad \text{s.t.}
 \begin{cases}
  &  \alpha_1  \geqslant 0  , \; \; \; \alpha_2 \geqslant 0,  \\
  &  \alpha_1 +\alpha_2 \eqslantless 1. \\
  \end{cases}
\end{split}
\end{equation}
As a special case, when $\sigma_{f}^{2}=\sigma_{g}^{2}=\sigma^{2}$, we have $\alpha_1=\alpha_2=\alpha$. Therefore,
 \begin{equation}\label{}
\begin{split}
C_{T_1}=C_{T_2}=\frac{2\alpha(1-2\alpha)\sigma^2}{16N}\leq \frac{\sigma^2}{64N},
\end{split}
\end{equation}
with equality when $\alpha=\frac{1}{4}$, or equivalently, $P_1=P_2=\frac{P}{4}$ and $P_{R_i}=\frac{P}{2N}$.
Thus, the OPA is such that the source nodes use half the total power and the relay nodes share the other half.
\emph{We should emphasize} that this power allocation only works
for the TWRNs, in which
all channels are assumed to be i.i.d. Rayleigh and no path-loss
is considered. It is obvious that it may not be optimal when
the path-loss effect is considered in the TWRNs.

As the expression in Eq. (\ref{eq:OPA}) is complicated, it is difficult to derive the closed-form solution for OPA when $\sigma_{f}^{2}\neq\sigma_{g}^{2}$. Here, we use numerical method, such as the nonlinear optimization method, to obtain the optimal solution.
In Section \ref{sec:simulations}, it is interesting to find that when $\sigma_{f}^{2}\neq\sigma_{g}^{2}$, $\alpha_1+\alpha_2=0.5$ still holds  for the simulated scenarios,  which means the source nodes still share half the total power.





\section{Simulations}\label{sec:simulations}

\begin{figure}[]
\centering
\includegraphics[width=0.75\textwidth]{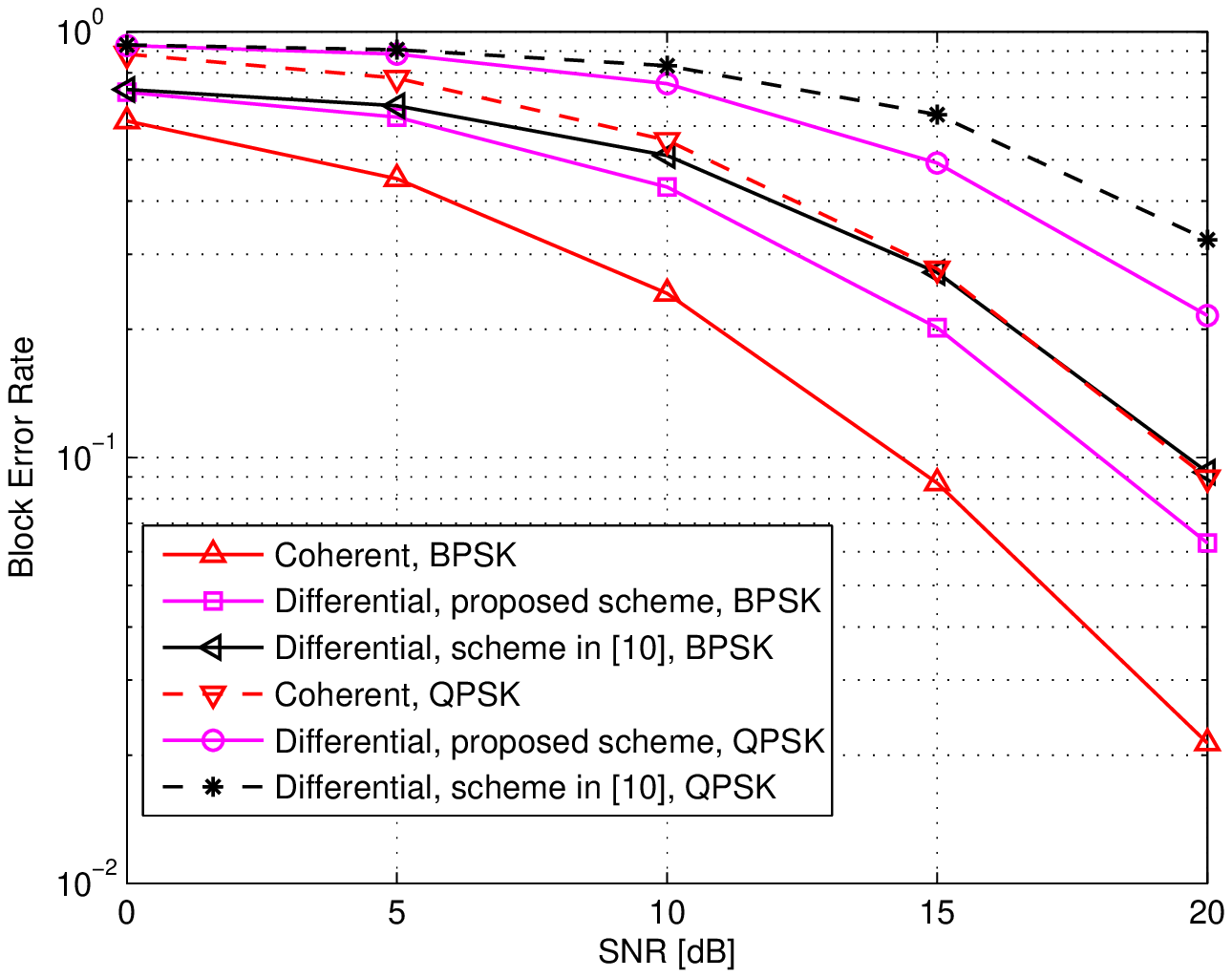}
\caption{\small Simulated BLER performance using Alamouti codes~($2$ relays).}  \label{fig:Simulation_Alamouti_BPSK_4PSK_Coh_Dif}
\end{figure}


\begin{figure}[]
\centering
\includegraphics[width=0.75\textwidth]{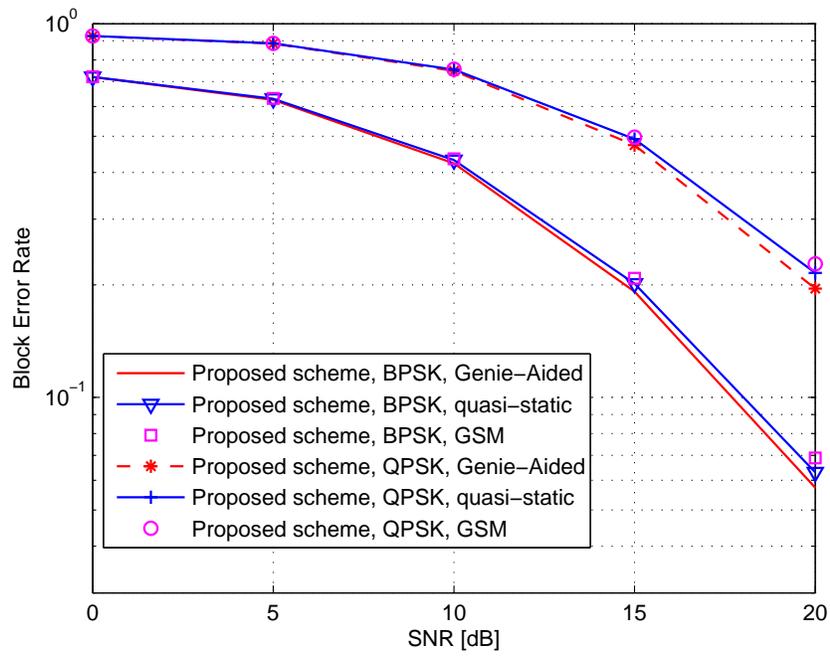}
\caption{\small Simulated BLER performance using Alamouti codes~($2$ relays) over a GSM channel and quasi-static Rayleigh fading channel.}
\label{fig:Simulation_Alamouti_BPSK_4PSK_Coh_Dif_GSMchan}
\end{figure}



\begin{figure}[]
\centering
\includegraphics[width=0.75\textwidth]{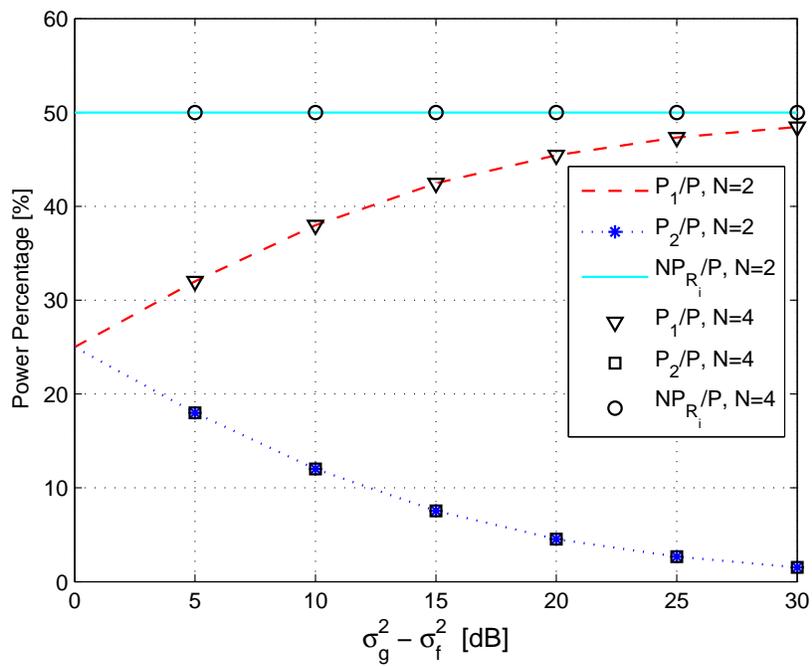}
\caption{ \small Optimum power allocation between source and relay nodes.
}  \label{fig:Theory_PowerAllocation}
\end{figure}


\begin{figure}[]
\centering
\includegraphics[width=0.75\textwidth]{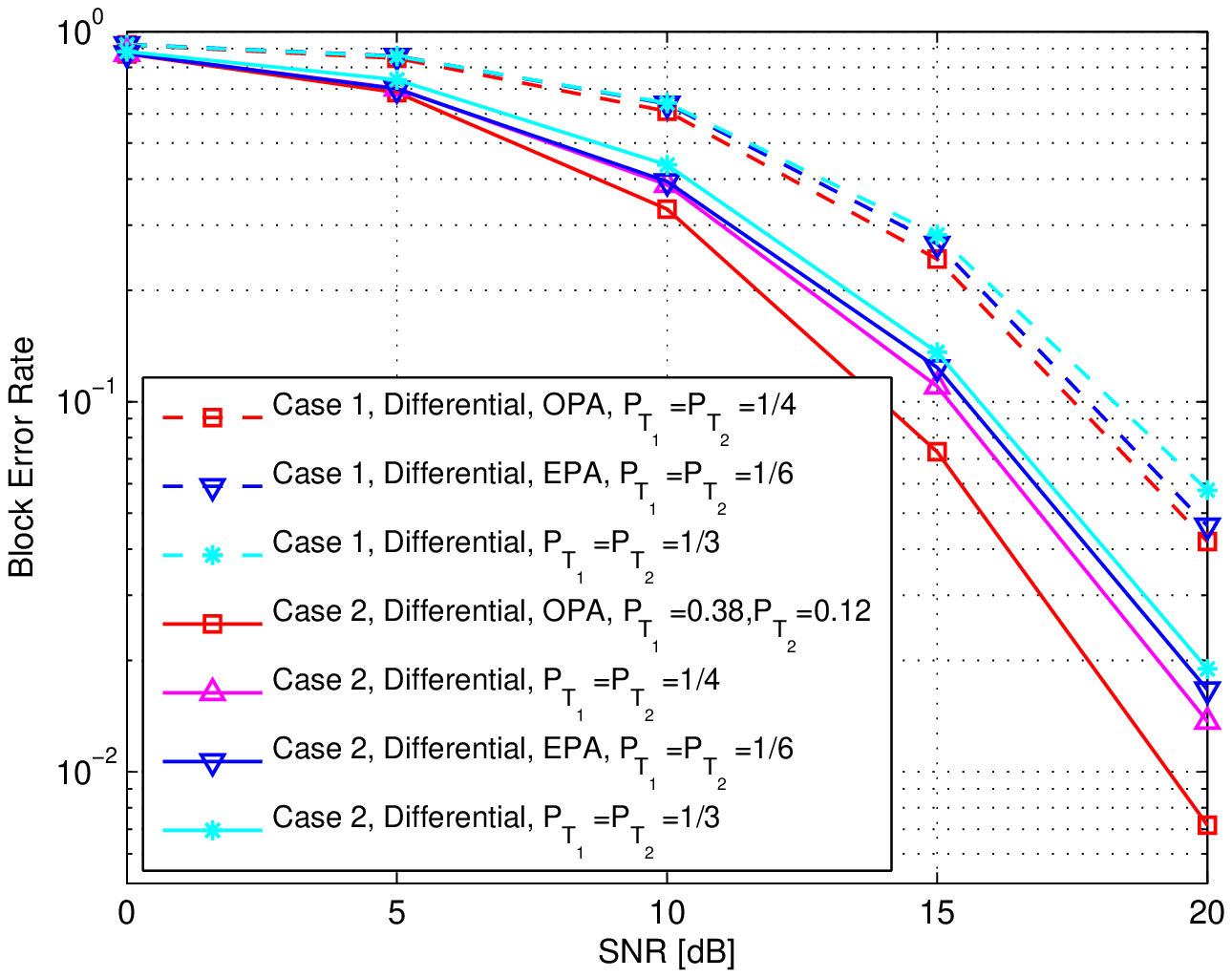}
\caption{ \small Simulated BLER performance by the proposed DDSTC-ANC using SORC with transmit power allocation~($4$ relays).
}  \label{fig:Simulation_SORC_Dif_EPA_OPA}
\end{figure}

In this section, we provide simulation results for the proposed
DDSTC-ANC scheme.
Simulations are
performed with PSK modulation and a frame size of $100$ symbols over a quasi-static Rayleigh fading
channels  without specific mention.
The estimated $\mathbf{h}_{11}(t)$ and $\mathbf{h}_{22}(t)$ are used in simulations.
For comparison, we also present simulations over a GSM channel model with a symbol sampling period of $T_s=3.693 \mu s$ and a maximum Doppler shift of $75$ Hz\cite{Utkovski2009P779}. This ensures a slowly changing channel and allows the assumption of a constant channel over two consecutive time blocks.  Without specific mention, we assume that $\sigma_{f}^2=\sigma_{g}^2=1$ and the source nodes uses half the total power and the relay nodes share the other half, i.e., $P_1=\frac{1}{4}P$, $P_2=\frac{1}{4}P$ and $P_{R_i}=\frac{1}{2N}P$.

From Fig. \ref{fig:Simulation_Alamouti_BPSK_4PSK_Coh_Dif},
we present the simulated BLER performance for the proposed DDSTC-ANC schemes using Alamouti  for TWRNs.
The performance of the corresponding coherent detection is plotted as well for better comparison.
It shows that the differential
scheme suffers about $3$-dB performance loss compared with the
corresponding coherent scheme, which has been validated in Subsection \ref{subsec:PEP}.
Fig. \ref{fig:Simulation_Alamouti_BPSK_4PSK_Coh_Dif}  also compares the simulated BLER performance for our proposed DDSTC-ANC and the differential scheme in \cite{Utkovski2009P779}.
It can be observed that our proposed scheme is superior to  (about $2$-dB)   the detector in \cite{Utkovski2009P779}.
The main reason is that the differential detection approach employed in \cite{Utkovski2009P779}
was based on the estimation of the previous symbol. Consequently, when one symbol was decoded incorrectly, it will affect the decoding of the consecutive symbols thus leading to   serious error propagation.
Comparatively, the information about the estimation of the previous symbol is not required in our proposed differential detection and is, thus, able to prevent the error propagation.

In Fig. \ref{fig:Simulation_Alamouti_BPSK_4PSK_Coh_Dif_GSMchan}, we include the Genie-aided results by assuming that  each source node can  perfectly remove its own information from the received signal.
It can be noted from the results that
the proposed differential detection scheme introduces negligible performance loss compared to the genie-aided scheme.  
We also compare the BLER performance of the differential scheme over a GSM channel~(a practical channel) and a quasi-static Rayleigh fading channel.
From the figure, it can be observed that there is almost no performance loss in a GSM channel compared to the quasi-static Rayleigh fading channel which clearly justifies the robustness of the proposed differential scheme in slow fading channels.
It also indicates that  the effect of non-constant channel on proposed scheme can be ignored which validate our assumption of quasi-static fading channel model.

%

In Fig.  \ref{fig:Theory_PowerAllocation}, we show the optimum power allocation scheme of the DDSTC-ANC scheme.  It can be seen that more  power should be allocated to $P_1$ when the channels from relay nodes to $T_2$ are  better than the channels from relay nodes to $T_1$. It is interesting to find that when $\sigma_{f}^{2}\neq\sigma_{g}^{2}$, the sources still share half the total power for the optimal power allocation.

In  Fig. \ref{fig:Simulation_SORC_Dif_EPA_OPA},
we examine the BLER performance of the proposed scheme with power allocation for the system with four relay nodes. The SORC is used at relays and signal is modulated from a BPSK constellation.
We also take into account the relay's location as: case 1~(the \emph{symmetric} case), where relays are placed halfway between the source nodes, i.e., $T_1,T_2$, and $\sigma_{f}^2=1$ and $\sigma_{g}^2=1$; and case 2~(the \emph{asymmetric} case), where relays are close to the source node~$T_2$, and $\sigma_{f}^2=1$ and $\sigma_{g}^2=10$.
It can be observed   from   Fig. \ref{fig:Simulation_SORC_Dif_EPA_OPA} that the BLER performance of the proposed scheme with power allocation
can provide considerable performance gain in comparison with the equal power allocation~(EPA) scheme, i.e., $P_1=P_2=P_{R_i}=\frac{P}{N+2}$.



\section{Conclusion }\label{sec:conclusion}

In this paper, we have proposed a DDSTC-ANC scheme for TWRNs with multiple relays.
A simple differential signal detector was developed to recover the desired signal at each source   by subtracting its contribution from the broadcasted signals.
The performance of the proposed DDSTC-ANC scheme was analyzed and the OPA was presented to improve the system performance.
Analytical results have been verified through Monte Carlo simulations.











%




\bibliographystyle{IEEEtran}
\bibliography{IEEEabrv,DD_STC_ANC}

\end{document}